\documentclass{llncs}
\pagestyle{plain}
\usepackage{mathtools}
\usepackage{enumerate}
\usepackage{hyperref}

\usepackage[utf8]{inputenc}
\usepackage[english]{babel}
\usepackage{enumerate}
\usepackage{url}
 

%

\newcommand{\bfe}[1]{\begin{bfseries}\emph{#1}\end{bfseries}\index{#1}}

\newcommand{\myra}{\mbox{$\:\rightarrow\:$}}

\newcommand{\sse}{\mbox{$\:\subseteq\:$}}

\newcommand{\LL}{\mbox{$\ldots$}}


\newcommand{\NI}{\noindent}
\newcommand{\HB}{\hfill{$\Box$}}

\newcommand{\III}{\vspace{3 mm}}
\newcommand{\II}{\vspace{2 mm}}





%
%
%
%
%

\newcommand{\szkew}[1]{\relax \setbox0=\hbox{\kern -24pt $\displaystyle#1$\kern 0pt }%
\box0}
{\catcode`\@=11 \global\let\ifjusthvtest@=\iffalse}

\newcounter{oldmycaption}












\usepackage{url}
\usepackage{hyperref}
\usepackage{graphicx}
\usepackage{latexsym}
\usepackage{amssymb}
\usepackage{amsmath}

\begin{document}

\title{Self-Stabilization Through the Lens of Game Theory}
\author{Krzysztof R. Apt\inst{1,2} 
\and Ehsan Shoja\inst{3}}
\institute{%
CWI, Amsterdam, The Netherlands
\and
MIMUW, University of Warsaw, Warsaw, Poland
\and
Sharif University of Technology, Tehran, Iran
}

\maketitle

\begin{abstract}
  In 1974 E.W.~Dijkstra introduced the seminal concept of
  self-stabilization that turned out to be one of the main approaches
  to fault-tolerant computing. We show here how his three solutions
  can be formalized and reasoned about using the concepts of game
  theory.  We also determine the precise number of steps needed to
  reach self-stabilization in his first solution.
\end{abstract}

\section{Introduction}

In 1974 Edsger W.~Dijkstra introduced in a two-page article
\cite{Dij74} the notion of self-stabilization.  The paper was
completely ignored until 1983, when Leslie Lamport stressed its
importance in his invited talk at the ACM Symposium on Principles of
Distributed Computing (PODC), published a year later as \cite{Lam84}.
Things have changed since then.  According to Google Scholar
Dijkstra's paper has been by now cited more than 2300 times.  It
became one of the main approaches to fault tolerant computing.  An
early survey was published in 1993 as \cite{Schneider:1993}, while the
research on the subject until 2000 was summarized in the book
\cite{Dol00}.  In 2002 Dijkstra's paper won the PODC influential paper
award (renamed in 2003 to Dijkstra Prize). The literature on the
subject initiated by it continues to grow.  There are annual
Self-Stabilizing Systems Workshops, the 18th edition of which took
part in 2016.

The idea proposed by Dijkstra is very simple. Consider a distributed
system viewed as a network of machines. Each machine has a local state
and can change it autonomously by inspecting its local state and the
local states of its neighbours. Some global states are identified as
\emph{legitimate}.  A distributed system is called self-stabilizing if it
satisfies the following three properties (the terminology is from \cite{AG93}):

\begin{description}
\item[closure:] starting from an arbitrary global state, the system is guaranteed to
reach a legitimate state,

\item[stability:] once a legitimate state is reached, the system remains in it forever,

\item[fairness:] in every infinite sequence of moves every machine is selected infinitely often.
\end{description}

Dijkstra proposed in \cite{Dij74} three solutions to
self-stabilization in which, respectively, $n$, four and three state
machines were used, where $n$ is the number of machines.  The proofs
were provided respectively in \cite{EWD:EWD391} (republished as
\cite{Dijkstra1982}), \cite{EWD:EWD392} and \cite{EWD:EWD396}
(republished with small modifications as \cite{Dij86}).  In his
solutions a legitimate state is identified with the one in which
exactly one machine can change its state.

In this paper we show how Dijkstra's solutions to self-stabilization
can be naturally formulated using the standard concepts of strategic
games, notably the concept of an improvement path.  Also we show how
one can reason about them using game-theoretic terms.  We focus on
Dijkstra's first solution but the same approach can be adopted to
other solutions.

The connections between self-stabilization and game theory were
noticed before. We discuss the relevant references in the final section.
The analysis of the original Dijkstra's solutions using game theory is
to our knowledge new.

This paper connects two unrelated areas, each of which has developed
its own well-established notation and terminology. To avoid possible
confusion, let us clarify that in what follows $S_i$ denotes a set of
strategies of a player in a strategic game, while the letter $S$
denotes a variable in a solution to the self-stabilization problem.
Further, the notion of a state in the self-stabilization refers to the
range of a variable and not to an assignment of values to all
variables, as is customary in the area of program semantics.

\section{Preliminaries}
\label{sec:prem}

A \bfe{strategic game} $\mathcal{G}=(S_1, \ldots, S_n,$
$p_1, \ldots, p_n)$ for $n > 1$ players consists of a non-empty set
$S_i$ of \bfe{strategies} and a \bfe{payoff function}
$p_i : S_1 \times \cdots \times S_n \myra \mathbb{R}$, for each player
$i$.  We denote $S_1 \times \cdots \times S_n$ by $S$, call each
element $s \in S$ a \bfe{joint strategy} and abbreviate the sequence
$(s_{j})_{j \neq i}$ to $s_{-i}$. Occasionally we write
$(s_i, s_{-i})$ instead of $s$.  We call a strategy $s_i$ of player
$i$ a \bfe{best response} to a joint strategy $s_{-i}$ of his
opponents if for all $ s'_i \in S_i$,
$p_i(s_i, s_{-i}) \geq p_i(s'_i, s_{-i})$.  A joint strategy $s$ is
called a \bfe{Nash equilibrium} if each $s_i$ is a best response to
$s_{-i}$. (In the literature these equilibria are often called
\emph{pure Nash equilibria} to distinguish them from Nash equilibria
in mixed strategies. The latter ones have no use in this paper.)

Further, we call a strategy $s'_i$ of player $i$ a \bfe{better
  response} given a joint strategy $s$ if
$p_i(s'_i, s_{-i}) > p_i(s_i, s_{-i})$.  We call $s \to s'$ an
\bfe{improvement step} (abbreviated to a \bfe{step}) if
$s' = (s'_i, s_{-i})$ for some better response $s'_i$ of player $i$
given $s$.  So $p_i(s') > p_i(s)$.

An \bfe{improvement path} is a maximal sequence
\[
s^1 \to s^2 \to \LL \to s^k \to \LL
\]
such that each $s^i \to s^{i+1}$ is an improvement step.

In the next section we consider specific strategic games on directed
graphs.  Fix a finite directed graph $G$.  We say that a node $j$ is a
\bfe{neighbour} of the node $i$ in $G$ if there is an edge $j \to i$
in $G$.  Let $N_i$ denote the set of all neighbours of node $i$ in the
graph $G$.  We now consider a strategic game in which each player is a
node in $G$.  Fix a non-empty set of strategies $C$ that we call
\bfe{colours}.

We divide the players in two categories: those who play a coordination game
and those who play an anti-coordination game.
More specifically,

\begin{itemize}
\item the players are the nodes of $G$,
  
\item the set of strategies of player (node) $i$ is a set of colours $A(i)$ such that   $A(i) \sse C$,

\item if the player plays the coordination game, then his
 payoff function is defined by
\[
p_i(s) = |\{j \in N_i \mid s_i = s_j\}|,
\]

\item if the player plays the anti-coordination game, then his payoff
  function is defined by
\[
p_i(s) = |\{j \in N_i \mid s_i \neq s_j\}|.
\]
\end{itemize}

So each node simultaneously chooses a colour and the payoff to the
player who plays the \emph{coordination game} is the number of its neighbours
that chose its colour, while the payoff to the player who plays the
\emph{anti-coordination game} is the number of its neighbours that chose a
different colour.

The games on directed graphs in which all players were playing the
coordination game were studied in \cite{ASW16}. Corresponding games on
undirected graphs were considered in \cite{AKRSS2017} and on weighted
undirected graphs in \cite{rahn:schaefer:2015}. In turn, the games in
which some players played the coordination game while other players
played the anti-coordination game were studied (in a more general
context of weighted hypergraphs) in \cite{SW17}.  If the underlying
(weighted) graph is undirected the game always has a Nash equilibrium,
which is not the case if the graph is directed. The absence of Nash
equilibria is crucial in the context of this paper.

We now move on to the subject of this paper and introduce the following concepts
concerning improvement paths.

\begin{definition}
Fix a strategic game.

\begin{itemize}
\item A joint strategy is \textbf{legitimate} if exactly one player does not play a best response in it.

\item An improvement path ensures

  \begin{itemize}
  \item \textbf{closure} if some joint strategy in it is legitimate, 

  \item \textbf{stability} if the successors of the legitimate joint strategies in it are legitimate, 

  \item \textbf{fairness} if every player is selected in it infinitely often, 

  \item \textbf{self-stabilization} (\textbf{in $k$
    steps}) if every player is selected in it infinitely often and from
  a certain point (\textbf{after $k$ steps}) each joint strategy in it is
  legitimate.

  \end{itemize}

\item A game \textbf{admits closure/stability/fairness}
if it is ensured by every improvement path in it.

\item A game \textbf{admits self-stabilization} (\textbf{in $k$ steps}) 
if it is ensured by every improvement path in it (\textbf{in $k$ steps}).

\end{itemize}

\end{definition}

For a more refined analysis we shall need the concept of a scheduler.

\begin{definition}
\mbox{} 

\begin{itemize}

\item A \textbf{scheduler} is a function $f$ that given a joint
  strategy $s$ that is not a Nash equilibrium and a player $i$ who
  does not hold in $s$ a best response selects a strategy $f(s,i)$ for
  $i$ that is a better response given $s$.

\item Consider a scheduler $f$. 
An improvement path 
\[
s^1 \to s^2 \to \LL \to s^k \to \LL,
\] 
is \textbf{generated by $f$} if for each $k \geq 1$, if $s^k$ is not a Nash equilibrium, then for some $i \in \{1, \LL, n\}$, 
$s^{k+1} = (f(s^k,i), s^k_{-i})$.

\item A scheduler $f$ \textbf{ensures self-stabilization} (\textbf{in $k$ steps}) 
if every improvement path generated by it ensures self-stabilization (\textbf{in $k$ steps}).
\end{itemize}

\end{definition}

So a game admits self-stabilization (in $k$ steps) if every scheduler
ensures self-stabilization (in $k$ steps).  Schedulers in the context
of strategic games were extensively considered in \cite{AS15}, though
they selected a player and not his strategy. The ones used here
correspond in the terminology of \cite{AS15} to the state-based
schedulers.

\section{Dijkstra's first solution}
\label{sec:1}

We start by recalling the first solution to the self-stabilization
problem given in \cite{Dij74}.  We assume a directed ring
of $n$ machines, each having a local variable and a program.  The
variables assume the values from the set $\{0, \LL, k-1\}$, where
$k \geq n$ and $\oplus$ stands for addition modulo $k$.
Each program consists of a single rule of the form
\[
P \to A
\]
where $P$ is a condition, called a \emph{priviledge}, on the local
variables of the machine and its predecessor in the ring, and $A$ is
an assignment to the local variable.  The variable of a considered
machine is denoted by $S$ and the variable of its predecessor by $L$.

The program for machine 1 is given by the rule
\[
L = S \: \to \: S := S \oplus 1
\]
and for the other machines by the rule
\[
L \neq S \: \to \: S := L.
\]

One assumes that each time a machine is selected, its priviledge is
true.  Dijkstra proved in \cite{Dijkstra1982} (that originally
appeared as \cite{EWD:EWD391}) that starting from an arbitrary initial
situation any sequence of machine selections leads to a situation in
which
\begin{itemize}
\item exactly one priviledge is true,

\item this property remains true forever.
\end{itemize}
Moreover, every machine is selected in this sequence infinitely often.

In the terminology introduced in the Introduction
the above system of machines is self-stabilizing.

We can model the above solution by means of 
the following strategic game $\cal{G}$ on a directed ring involving $n$ players:

\begin{itemize}
\item each player has the same set $C$ of strategies (called colours), where $|C| \geq 2$,

\item exactly one player plays the anti-coordination game on the ring,

\item all other players play the coordination game  on the ring.
\end{itemize}

To fix notation we assume that it is player 1 who plays the anti-coordination game.
So the payoff functions are simply:
\[
p_1(s) := \begin{cases}
    0 & \mathrm{if}\ s_1 = s_n \\
    1 & \mathrm{otherwise}
\end{cases}
\]
and for $i \neq 1$
\[
p_i(s) := \begin{cases}
    0 & \mathrm{if}\ s_i \neq s_{i-1} \\
    1 & \mathrm{otherwise}
\end{cases}
\]

We arrange the colours in $C$ in a cyclic order and given a colour $c$
we denote its successor in this order by $c'$.  
The following result provides a game-theoretic account of the above
solution to the self-stabilization problem.

\begin{theorem} \label{thm:0} Consider the game $\cal{G}$. Suppose
  that $n \geq 3$ and $|C| \geq n$.  Let $f$ be a scheduler such that
\[
  \begin{array}{l}
\mbox{$f(s,1) = s'_1$.}
  \end{array}
\]
Then $f$ ensures self-stabilization in $\cal{G}$. 
\end{theorem}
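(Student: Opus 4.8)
The plan is to recognize that under the scheduler $f$ the game dynamics coincides exactly with Dijkstra's original token ring, and then to recast his convergence argument in terms of improvement paths. First I would pin down the induced dynamics: for a node $i \neq 1$, when $s_i \neq s_{i-1}$ the unique better response is to set $s_i := s_{i-1}$ (copying the predecessor is the only move raising $p_i$ from $0$ to $1$), so the scheduler has no freedom on these nodes; for node $1$ the rule $f(s,1)=s_1'$ sets $s_1$ to the cyclic successor whenever $s_1 = s_n$. Thus a node fails to play a best response precisely when the privilege of the corresponding machine is enabled, and an $f$-generated improvement path is exactly a run of Dijkstra's system. I would also record at the outset that $\mathcal{G}$ has no Nash equilibrium: all best responses for $i \geq 2$ force $s_1 = s_2 = \cdots = s_n$, which makes $s_1 = s_n$ and hence node $1$ not a best response. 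Consequently every improvement path generated by $f$ is infinite, which the fairness clause will need.

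The backbone is a potential counting the privileges. Let $P(s)$ be the number of players not playing a best response in $s$. Since there is no Nash equilibrium, $P(s) \geq 1$ always, and a state is legitimate exactly when $P(s) = 1$. The key monotonicity claim is that every step generated by $f$ satisfies $\Delta P \le 0$: firing the selected node always cancels its own privilege, while it can alter the privilege status of only one node, namely its forward neighbour on the ring, so at most one new privilege is created. Granting this, stability is immediate: once $P = 1$, monotonicity together with $P \geq 1$ forces $P$ to remain $1$ forever, so the successors of legitimate states are legitimate.

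The crux is closure: showing that $P$ actually reaches $1$. I would first argue that node $1$ fires infinitely often along any $f$-path: if it ever stopped firing, then with $s_1$ frozen the copying rule would propagate $s_1$'s value forward — node $2$ fires at most once and then freezes, then node $3$, and so on — until the ring is monochromatic, the unique remaining privilege sits at node $1$, and node $1$ is forced to fire, a contradiction. The heart of the matter is then a \emph{fresh value} argument exploiting $|C| \geq n$: since nodes $2, \ldots, n$ number only $n-1$ while $|C| \geq n$, at every instant some colour is absent from them, and because node $1$ cycles through all colours as it keeps firing, it eventually holds a colour $c$ that no other node currently holds. From that point $c$ forms a contiguous monochromatic block starting at node $1$ that can only grow (a node already equal to its predecessor never fires), and node $1$ cannot fire again until $s_n = c$, which first happens exactly when the block has swept the entire ring; at that instant all nodes hold $c$ and $P = 1$. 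I expect the delicate point — and the main obstacle — to be the guarantee that the fresh colour actually occurs: one must argue that the downstream values are inherited copies of node $1$'s earlier outputs and therefore occupy at most $n-1$ colours, so that node $1$'s incrementing must eventually outrun them; this is precisely where $|C| \geq n$ is indispensable.

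Finally I would settle fairness. Once a legitimate state is reached it persists by stability, and in the $P = 1$ regime the single privilege behaves as a token that each step pushes one node forward around the ring, cycling $1 \to 2 \to \cdots \to n \to 1$ indefinitely; since the path is infinite and at each step the unique privileged node is forced to be selected, every player is selected infinitely often. Combining closure, stability, and this circulation shows that every improvement path generated by $f$ eventually stays legitimate while selecting every player infinitely often, i.e.\ $f$ ensures self-stabilization.
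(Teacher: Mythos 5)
Your proposal is correct in outline, but it takes a genuinely different route from the paper's own proof of this theorem. The paper disposes of Theorem \ref{thm:0} in one line, by observing a 1--1 correspondence between $f$-generated improvement paths and runs of Dijkstra's original machines, and then delegating to Dijkstra's published correctness proof. You instead give a direct, self-contained game-theoretic argument. What is interesting is that your argument closely anticipates the machinery the paper only develops later for the stronger Theorem \ref{thm:2}: your ``no Nash equilibrium'' observation is Note \ref{not:1}; your monotone potential $P$ (number of non-best-responders, which never increases because a firing node cancels its own privilege and can only affect its ring-successor) gives Corollary \ref{cor:2}; your ``if player $1$ stops firing, the copying dynamics on the broken ring terminates'' is exactly the chain argument of Note \ref{not:3} and Theorem \ref{thm:fairness}; and your fresh-colour sweep is the heart of Part~1 of the proof of Theorem \ref{thm:2}. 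Your approach buys a proof that does not rely on an external reference; the paper's one-liner buys brevity at the cost of deferring all substance to \cite{EWD:EWD391}, and reserves the detailed argument for the sharper statement with $|C|\ge n-1$ and an explicit step count.

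One point deserves more care than your sketch gives it, and you correctly flag it yourself: ``node $1$ cycles through all colours, so it eventually holds a colour no other node currently holds'' does not follow merely from the fact that at every instant some colour is missing from positions $2,\dots,n$, since the missing colour changes over time. The clean repair is the one you gesture at: fix a time $t_0$, let $M=\{s_2(t_0),\dots,s_n(t_0)\}$ (so $|M|\le n-1<|C|$), and let $t_1\ge t_0$ be the \emph{first} time player $1$'s colour lies outside $M$ (it exists because player $1$ fires infinitely often and visits a colour outside $M$). Up to time $t_1$ the only colours that can enter positions $2,\dots,n$ are copies of colours player $1$ held before $t_1$, all of which lie in $M$ by minimality; hence at $t_1$ player $1$'s colour is genuinely fresh. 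This is precisely the induction (``each time player $i$ switches to $c$, all of $1,\dots,i-1$ already hold $c$'') used in the paper's proof of Theorem \ref{thm:2}. With that step made precise, your closure, stability and fairness arguments all go through.
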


Thus the only restriction on the scheduler $f$ is that for player 1 it
selects the next colour in the cyclic order on $C$ (as $s'_1$ denotes
the successor of $s_1$).  

\begin{proof}
  There is a 1-1 correspondence between the maximal sequences of moves of the machines in Dijkstra's solution
and the improvement paths generated by the schedulers satisfying the stated condition.
\HB
\end{proof}

We shall return to the above result in Section \ref{sec:analysis}.  It
is useful to point out why we did not incorporate the specific choice
of the strategies into the payoff functions and used a scheduler
instead.  This alternative would call for selecting $\{0, \LL, k-1\}$
as the set of strategies for each player and using the following
payoff function for player 1, where $\oplus$ stands for addition
modulo $k$:
\[
p_1(s) := \begin{cases}
    0 & \mathrm{if}\ s_1 \neq s_n \oplus 1\\
    1 & \mathrm{otherwise}
\end{cases}
\]
However, the resulting game would then correspond to a setup
in which the program for machine 1 is
\[
S \neq L \oplus 1\: \to \: S := L \oplus 1.
\]

Moreover, the resulting game does not admit self-stabilization (and a
fortiori the resulting programs for the machines do not form a
solution for self-stabilization). Indeed, assume three players and
$k = 3$, so that the strategies of the players are 0, 1, 2.  Then the
following infinite improvement path does not ensure closure:
\[
  \begin{array}{l}
(200 \to 220 \to 120 \to 122 \to 112 \to 012 \to 011 \to 001 \to 201 \to)^{*},
  \end{array}
\]
where each joint strategy is displayed as a string of three numbers
from $\{0,1,2\}$ and $^*$ stands for the infinite repetition of the
exhibited prefix of an improvement path.

\section{Dijkstra's three-state solution}

Next we discuss Dijkstra's three-state solution to the
self-stabilization problem.  We follow here the presentation he
gave in \cite{Dij86}, where he provided a particularly elegant
correctness proof.

There are $n$ machines arranged in an undirected ring, the first one
called the \emph{bottom} machine, the last one called the \emph{top}
machine, and the other machines called \emph{normal}.

The condition of each rule is now on the local variables of the machine
and its two neighbours.  The variable of a considered machine is
denoted by $S$, of its left neighbour by $L$ and of its right
neighbour by $R$.  All variables range over the set $\{0,1,2\}$ and
$\oplus$ stands for addition modulo 3.

The program for the bottom machine is given by the rule
\[
S \oplus 1 = R \: \to \: S := S \oplus 2,
\]
for each normal machine by the rule
\[
L = S \oplus 1 \lor S \oplus 1 = R \: \to \: S := S \oplus 1,
\]
and for the top machine by the rule
\[
L = R \land S \neq R \oplus  1 \: \to \: S := R \oplus 1.
\]
Dijkstra proved that the above system of machines is self-stabilizing.

This solution can be represented and reasoned about using strategic
games, though these games are not anymore coordination or
anti-coordination games.  First note that, in contrast to the case of
Dijkstra's first solution, this solution cannot be modeled using
strategic games with 0/1 payoffs.  To see it assume $n = 3$ and
consider the global state of the system described by $(2,1,0)$. Then
the priviledge of machine 2 is true, since $L = S \oplus 1$, as
$2 = 1 \oplus 1$.  After machine 2 is selected the global state
changes to $(2,2,0)$. In this state the priviledge of machine 2 is
again true, since $S \oplus 1 = R$, as $2 \oplus 1 = 0$. So in the
improvement path of the corresponding strategic game player 2 can be
selected twice in succession. This can be modelled only using at least
three payoff values.

To capture such a possibility we need to analyze when a machine can be
selected twice in succession. This can happen when successively
$L = S \oplus 1$ and $S \oplus 1 = R$ are true or successively
$S \oplus 1 = R$ and $L = S \oplus 1$ are true. Taking into account
the action of the assignment $S := S \oplus 1$ the first possibility
means that initially $L = S \oplus 1 \land S \oplus 2 = R$ is true and
the second possibility that initially
$S \oplus 1 = R \land L = S \oplus 2$ is true. These two options can
be rewritten as
$S \oplus 1 \in \{L, R\} \land S \oplus 2 \in \{L, R\}$.

To complete this analysis note that a machine can be selected only
once in succession, when initially
$L = S \oplus 1 \land S \oplus 2 \neq R$ is true or
$S \oplus 1 = R \land L \neq S \oplus 2$ is true, which can be
rewritten as
$S \oplus 1 \in \{L, R\} \land S \oplus 2 \not\in \{L, R\}$.

Translating it into a game-theoretic notation that uses indices we are
brought into the following strategic game $\cal{G}$ for $n$ players.
Each player has $\{0,1,2\}$ as the set of strategies. The payoff
functions are defined as follows, where we assume that player 1
corresponds to the bottom machine and player $n$ to the top machine:
{
\setlength{\abovedisplayshortskip}{-10pt}
\begin{align*}
p_1(s) &:=
\begin{cases} 
0 &  \text{if } s_1 \oplus 1= s_2 \\
1 &  \text{otherwise}\\
\end{cases}\\[10pt]
\shortintertext{for $1 < i < n$} &\nonumber \\
p_i(s) &:=
\begin{cases} 
0 &  \text{if } s_i \oplus 1 \in\{s_{i-1}, s_{i+1}\} \land s_i\oplus 2 \in \{s_{i-1}, s_{i+1}\} \\
1 &  \text{if } s_i \oplus 1 \in\{s_{i-1}, s_{i+1}\} \land s_i\oplus 2 \not\in \{s_{i-1}, s_{i+1}\} \\
2 & \text{otherwise}\\
\end{cases}\\[5pt]
p_n(s) &:=
\begin{cases} 
0 &  \text{if } s_1 = s_{n-1} \land s_n \neq s_1 \oplus 1  \\
1 &  \text{otherwise} 
\end{cases}
\end{align*}
}

Dijkstra's result concerning the above system of three-state machines
is captured by the following theorem.

\begin{theorem} \label{thm:3} Consider the above game
  $\cal{G}$. Suppose that $n \geq 3$.  Let $f$ be a scheduler such
  that
\[
  \begin{array}{l}
\mbox{$f(s, 1) = s_1 \oplus 2$}, \\
\mbox{$f(s, i) = s_i \oplus 1$, where $1 < i < n$}, \\
\mbox{$f(s, n) = s_1 \oplus 1$}.
  \end{array}
\]
Then $f$ ensures self-stabilization in $\cal{G}$.
\end{theorem}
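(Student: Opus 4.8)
The plan is to follow the template of the proof of Theorem~\ref{thm:0}: I would exhibit a 1-1 correspondence between the improvement paths of $\mathcal{G}$ generated by $f$ and the maximal sequences of machine selections in Dijkstra's three-state solution, and then transport Dijkstra's self-stabilization result across it. Concretely, I identify a global state $(v_1,\ldots,v_n)\in\{0,1,2\}^n$ of the machines with the joint strategy $s$ having $s_i=v_i$, noting that the left and right neighbours of machine $i$ are exactly the nodes $i-1$ and $i+1$ on the ring (with machine $n$'s right neighbour being machine $1$), so that the payoff functions read off precisely the data that the privileges and the assignments use.

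The heart of the argument is a per-player dictionary asserting, for each player $i$ and each joint strategy $s$, that (i) player $i$ does not play a best response in $s$ if and only if the privilege of machine $i$ is true in the corresponding global state, and (ii) the colour $f(s,i)$ prescribed by the scheduler reproduces machine $i$'s assignment and is a genuine better response. For the bottom player and the top player this is immediate: their payoffs take only the values $0$ and $1$, payoff $0$ is by definition equivalent to the privilege being true, and one checks directly that the prescribed colour ($s_1\oplus 2$ and $s_1\oplus 1$ respectively) turns the payoff from $0$ into $1$, matching the assignments $S:=S\oplus 2$ and $S:=R\oplus 1$.

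The main obstacle is the analogous verification for a normal player $1<i<n$, where the three-valued payoff must be shown to track exactly how many times the machine may be selected in immediate succession. Writing $\{a,b\}=\{s_{i-1},s_{i+1}\}$, I would first observe that payoff $2$ holds iff $s_i\oplus 1\notin\{a,b\}$, i.e.\ iff the privilege is false, and that a colour giving payoff $2$ always exists because $|\{a,b\}|\le 2<3$; hence ``not a best response'' is genuinely ``payoff $<2$'', which is ``privilege true''. It then remains to check that the prescribed move $s_i\mapsto s_i\oplus 1$ (matching $S:=S\oplus 1$) strictly raises the payoff. In the payoff-$1$ case $s_i\oplus 2\notin\{a,b\}$, so after the move the value $s_i\oplus 2$ that now governs the payoff lies outside $\{a,b\}$ and the payoff becomes $2$. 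In the payoff-$0$ case both $s_i\oplus 1$ and $s_i\oplus 2$ lie in $\{a,b\}$, forcing $\{a,b\}=\{s_i\oplus 1,s_i\oplus 2\}$ and hence $s_i\notin\{a,b\}$; one then computes that after the move the payoff is exactly $1$. Thus the payoff rises $0\to 1$ or $1\to 2$ and the step is an improvement step, as required. This modular-arithmetic case analysis, together with the existence of a payoff-$2$ response, is the only place where real computation is needed.

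With the dictionary in place the step relations of the two models coincide, so the improvement paths generated by $f$ and the maximal machine-selection sequences correspond bijectively. Since Dijkstra's system never deadlocks, at least one privilege is always true, so by (i) no joint strategy is a Nash equilibrium and every improvement path is infinite, matching the infinite maximal sequences. Finally I would invoke Dijkstra's correctness result \cite{Dij86}: every such sequence reaches, after finitely many steps, a global state with exactly one true privilege and remains in the one-privilege regime thereafter, which in game-theoretic terms says exactly that from some point on each joint strategy is legitimate. Fairness is then automatic: in a legitimate state there is a unique improving move, the single privilege circulates around the ring reaching every machine, and the pre-legitimate prefix is finite, so every player is selected infinitely often. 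Hence every improvement path generated by $f$ ensures self-stabilization, which is the claim.
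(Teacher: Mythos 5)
Your proposal is correct and takes essentially the same approach as the paper: the proof of Theorem~\ref{thm:3} given there is exactly the correspondence between improvement paths generated by $f$ and maximal sequences of machine moves in Dijkstra's three-state solution, with Dijkstra's correctness result transported across it. You simply spell out the per-player dictionary (privilege true iff not a best response, and the scheduled move $s_i \oplus 1$ raising a normal player's payoff from $0$ to $1$ or from $1$ to $2$) that the paper leaves implicit.
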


\begin{proof}
Every maximal sequence of moves of the machines in Dijkstra's three-state solution corresponds to
an improvement path generated by a scheduler satisfying the stated conditions.
Conversely, every improvement path generated by a scheduler satisfying the stated conditions corresponds
to a maximal sequence of moves of the machines in Dijkstra's three-state solution with each improvement step
that results for a player $i$ in the payoff increase by 2 mapped to two consecutive moves of machine $i$.
\HB
\end{proof}

\section{A four-state solution}

Finally, we consider a four-state solution. Instead of Dijkstra's
solution that uses two Boolean variables per machine we consider a
modified solution due to \cite{Ghosh1993} that uses per machine a
single variable that can take four values. We assume the set up and
terminology of the previous section, with the following differences.

The variable of machine 1 now ranges over $\{1,3\}$, of machine $n$
over $\{0,2\}$.  and all other variables range over $\{0,1,2,3\}$.
Further, $\oplus$ stands now for addition modulo 4.

The program for the bottom machine is given by the rule
\[
S \oplus 1 = R \: \to \: S := S \oplus 2,
\]
for each normal machine by the rule
\[
L = S \oplus 1 \lor S \oplus 1 = R \: \to \: S := S \oplus 1,
\]
and for the top machine by the rule
\[
L = S \oplus 1 \: \to \: S := S \oplus 2.
\]

Following the considerations of the previous section this solution can
be modeled by the following strategic game $\cal{G}$ for $n$ players.
The sets of strategies are as follows: for player 1: $\{1,3\}$, for
player $n$: $\{0,2\}$, and for all other players: $\{0,1,2,3\}$.

The payoff functions are defined as follows, where we assume that
player 1 corresponds to the bottom machine and player $n$ to the top
machine: 
{ 
\setlength{\abovedisplayshortskip}{-10pt}
\begin{align*}
p_1(s) &:=
\begin{cases} 
0 &  \text{if } s_1 \oplus 1 = s_2  \\
1 &  \text{otherwise}\\
\end{cases}\\[10pt]
\shortintertext{for $1 < i < n$} &\nonumber \\
p_i(s) &:=
\begin{cases} 
0 &  \text{if } s_i \oplus 1 \in\{s_{i-1}, s_{i+1}\} \land s_i\oplus 2 \in \{s_{i-1}, s_{i+1}\} \\
1 &  \text{if } s_i \oplus 1 \in\{s_{i-1}, s_{i+1}\}) \land s_i\oplus 2 \not\in \{s_{i-1}, s_{i+1}\} \\
2 & \text{otherwise}\\
\end{cases}\\[5pt]
p_n(s) &:=
\begin{cases} 
0 &  \text{if } s_n \oplus 1 = s_{n-1} \\
1 &  \text{otherwise}
\end{cases}
\end{align*}
}

The reason for using three values in the payoff functions $p_i$, where
$1 < i < n$, is as in the previous section.  The corresponding result
concerning self-stabilization of the above system of four-state machines
is now captured by the following game-theoretic theorem.

\begin{theorem} \label{thm:4} 
Consider the above game $\cal{G}$. Suppose
  that $n \geq 3$.
Let $f$ be a scheduler such that
\[
  \begin{array}{l}
\mbox{$f(s, 1) = s_1 \oplus 2$}, \\
\mbox{$f(s, i) = s_i \oplus 1$, where $1 < i < n$}, \\
\mbox{$f(s, n) = s_n \oplus 2$}.
  \end{array}
\]
Then $f$ ensures self-stabilization in $\cal{G}$.
\end{theorem}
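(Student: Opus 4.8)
The plan is to follow the same strategy as in the proofs of Theorems~\ref{thm:0} and~\ref{thm:3}: exhibit a faithful correspondence between the maximal sequences of machine moves in Ghosh's four-state solution and the improvement paths in $\cal{G}$ generated by $f$, and then transfer the self-stabilization of the machine system, established in \cite{Ghosh1993}, across this correspondence. The crux is to show that $f$ is a legitimate scheduler and that the rule-induced moves of the machines agree, step for step, with the improvement steps prescribed by $f$, so that there is in fact nothing more to prove about convergence.

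First I would check the two facts that make the correspondence work, separately for each of the three kinds of player: (i) the privilege of the corresponding machine is true in a joint strategy $s$ exactly when that player does not hold a best response in $s$; and (ii) the strategy prescribed by $f$ equals the right-hand side of the machine's rule, is a better response, and lies in the player's strategy set. Concretely, $p_1(s)=0$ iff $s_1\oplus 1=s_2$, which is precisely the bottom machine's privilege $S\oplus 1=R$, and the move $s_1\mapsto s_1\oplus 2$ both stays inside $\{1,3\}$ and raises $p_1$ from $0$ to $1$; symmetrically $p_n(s)=0$ iff $s_n\oplus 1=s_{n-1}$, while $s_n\mapsto s_n\oplus 2$ stays inside $\{0,2\}$ and raises $p_n$ from $0$ to $1$. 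For a normal player $i$ the three payoff values record whether the machine can be selected twice ($p_i=0$), once ($p_i=1$), or not at all ($p_i=2$): the privilege $s_i\oplus 1\in\{s_{i-1},s_{i+1}\}$ holds exactly when $p_i\in\{0,1\}$, and a short case analysis shows that $s_i\mapsto s_i\oplus 1$ raises $p_i$ by exactly one in either case, from $0$ to $1$ or from $1$ to $2$, so it is always a better response and $f$ is well defined.

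With these equivalences in hand the correspondence is essentially an identity: a machine is enabled precisely when its player fails to best-respond, and the rule it then applies is exactly $f(s,\cdot)$, so every sequence of machine moves is an improvement path generated by $f$ and conversely. A configuration with exactly one enabled machine corresponds to a joint strategy in which exactly one player is not best-responding, that is, to a legitimate joint strategy, and a normal machine selected twice in succession maps to two consecutive improvement steps of the same player with payoff rising $0\to 1\to 2$. It then follows from Ghosh's theorem that every such path reaches and forever remains in legitimate configurations, with every machine selected infinitely often, which is exactly self-stabilization of the generated improvement path. Hence $f$ ensures self-stabilization in $\cal{G}$.

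The part requiring the most care is the bookkeeping for the normal players: one must confirm that the three-valued payoff faithfully encodes the \emph{selected twice} behaviour of Ghosh's rule (in particular that from $p_i=0$ the neighbours are forced to be exactly $s_i\oplus 1$ and $s_i\oplus 2$) and that each prescribed move is an improvement by exactly one, so that the identification of maximal machine-move sequences with $f$-generated improvement paths is exact. Once this finite case analysis is carried out, no fresh convergence argument is needed: closure, stability, and fairness are inherited wholesale from \cite{Ghosh1993}, so the only genuinely new work is the verification of the correspondence itself.
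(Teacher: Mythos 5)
Your proposal is correct and takes essentially the same approach as the paper, which proves this theorem (by reference to the proof of Theorem~\ref{thm:3}) via exactly the correspondence you describe between maximal machine-move sequences of the four-state solution and the improvement paths generated by $f$, deferring convergence to the known correctness of the machine system. Your version is in fact slightly more careful: you verify that under $f$ each normal player's payoff rises by exactly one per step, so the correspondence is step-for-step, whereas the paper only remarks on mapping payoff increases by $2$ to two consecutive machine moves.
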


\begin{proof}
  The same as the proof of Theorem \ref{thm:3}.
\HB
\end{proof}

\section{A game-theoretic analysis of the first solution}

\label{sec:analysis}

We now analyze in detail the strategic game $\cal{G}$ introduced in Section
\ref{sec:1} with the aim of proving a stronger result about the first
solution to self-stabilization.  We begin with the following
observation.

\begin{note} \label{not:1}
The game $\cal{G}$ admits no Nash equilibria.
\end{note}
\begin{proof}
  Suppose otherwise. Let $s$ be a Nash equilibrium of $\cal{G}$.  Then
  every player $i \neq 1$ holds in $s$ the colour of its
  predecessor. Hence all players hold in $s$ the same colour, in
  particular players $1$ and $n$. But then player $1$ does not hold in
  $s$ a best response, which yields a contradiction.
\HB
\end{proof}

\begin{corollary} \label{cor:2}
The game $\cal{G}$ admits stability.
\end{corollary}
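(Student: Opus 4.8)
The plan is to show that \emph{every} improvement step out of a legitimate joint strategy lands again in a legitimate one, with the no-equilibrium fact (Note~\ref{not:1}) supplying the one genuinely non-trivial ingredient. Recall that a joint strategy is legitimate exactly when a single player fails to play a best response, and that admitting stability means every improvement path has this preservation property. So fix any improvement path, any legitimate $s^k$ in it, and its successor $s^{k+1}$; the successor exists because a legitimate joint strategy is not a Nash equilibrium and hence still admits a step. In the step $s^k \to s^{k+1}$ some player $j$ switches to a better response, and since a better response is only available to a player who is not already playing a best response, $j$ must be precisely the unique non-best-responder in $s^k$.

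Next I would exploit two features of $\cal{G}$. First, all payoffs are $0/1$, so a better response raises player $j$'s payoff from $0$ to $1$; as $1$ is maximal, player $j$ \emph{does} play a best response in $s^{k+1}$. Second, the game lives on a directed ring in which $p_i(s)$ depends only on $s_i$ and on the colour $s_{i-1}$ of its unique predecessor. Hence changing $s_j$ can alter the best-response status only of player $j$ and of the node whose predecessor is $j$, namely its ring-successor $j+1$ (indices modulo $n$, the predecessor of $1$ being $n$, so that $p_1$ indeed depends on $s_n$); every other player's payoff, and thus its best-response status, is untouched.

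Combining these: in $s^k$ every player except $j$ plays a best response; after the step $j$ also plays a best response, and the only status that can have flipped is that of $j+1$. So at most one player fails to play a best response in $s^{k+1}$. By Note~\ref{not:1} the game has no Nash equilibrium, so at least one player fails to play a best response in $s^{k+1}$. Therefore exactly one does, i.e. $s^{k+1}$ is legitimate, completing the argument.

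I do not expect a real obstacle. The two delicate points are the locality claim (a single player's move affects only its own payoff and that of its ring-successor) and the $0/1$ observation (one improvement step already restores the mover to a best response), and both are immediate from the explicit payoff functions. The essential work is done by Note~\ref{not:1}, which upgrades the combinatorial bound ``at most one deviator in $s^{k+1}$'' to the exact count ``exactly one deviator'' needed for legitimacy.
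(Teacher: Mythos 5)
Your proof is correct and follows essentially the same route as the paper: show that after an improvement step from a legitimate joint strategy at most one player can fail to play a best response, then invoke Note~\ref{not:1} to rule out a Nash equilibrium and conclude that exactly one fails. The paper compresses your locality and $0/1$-payoff observations into the single phrase ``by the definition of the game either $s'$ is legitimate or is a Nash equilibrium,'' so your version is just a fully spelled-out form of the same argument.
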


\begin{proof}
  Suppose $s \to s'$ is an improvement step in the game $\cal{G}$ and
  that $s$ is legitimate. Then by the definition of the game either
  $s'$ is legitimate or is a Nash equilibrium. So the claim follows by
  Note \ref{not:1}.  
\HB
\end{proof}

We shall use below the following observation.

\begin{note} \label{not:3}
  Consider a coordination game on a chain of $n$ players in which each player
has the same set of strategies. Then all improvement paths in this game
are of length $\leq \frac{n (n-1)}{2}$.
Further, improvement paths of length $\frac{n (n-1)}{2}$ exist.
\end{note}
\begin{proof}
  Suppose the chain is $1 \to 2 \to \LL \to n$. Consider an improvement
  path $\xi$.  Each player $i$ can adopt in $\xi$ at most $i-1$
  colours, namely the strategies held by his predecessors in the
  chain. So each player $i$ can be involved in at most $i-1$
  improvement steps. Consequently the length of $\xi$ is bound by
  $\sum_{i = 1}^n (i - 1) = \frac{n (n-1)}{2}$.

  To establish the second claim take an initial joint strategy $s$ in
  which all colours differ. Then the required number of steps is
  achieved by scheduling the players in the `rightmost first' order,
  so
\[
(n), (n-1, n), (n-2, n-1, n), \LL, (2, 3, \LL, n),
\]
where to increase readability we separated the consecutive phases
using brackets.

\HB
\end{proof}

\begin{theorem} \label{thm:fairness}
The game $\cal{G}$ admits fairness.
\end{theorem}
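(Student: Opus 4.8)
The plan is to argue by contradiction, showing that if any player moved only finitely often along an improvement path then \emph{every} player would, contradicting the fact that the paths here are infinite. So I would first record that every improvement path in $\cal{G}$ is infinite: a maximal improvement path can terminate only at a joint strategy admitting no improvement step, that is, at a Nash equilibrium, and by Note \ref{not:1} there are none. Hence fairness is precisely the assertion that along this infinite path every player is selected infinitely often. Assume it fails, let $I$ be the set of players selected infinitely often and $F$ its complement; then $F \neq \ES$ by assumption and $I \neq \ES$ since the path is infinite. I would pass to a suffix $\xi$ lying beyond the last move of every player in $F$, so that along $\xi$ the colours of the $F$-players are frozen and only $I$-players move.

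The heart of the argument is two ``re-enabling'' observations, one for each kind of player, describing how a privilege that has just been exercised can reappear. First, for a coordination player $i \neq 1$ with $i \in I$, I would look at two consecutive moves of $i$ in $\xi$: right after the first, the move has set $s_i = s_{i-1}$ (the best response of player $i$), while right before the second, player $i$ again fails to play a best response, i.e.\ $s_i \neq s_{i-1}$. Since $s_i$ is unchanged between these two moves of $i$, the value $s_{i-1}$ must have changed in between, so player $i-1$ moved there. As $i$ moves infinitely often there are infinitely many such intervals, whence $i-1 \in I$.

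Second, for the anti-coordination player, suppose $1 \in I$. When player 1 moves it does so because $s_1 = s_n$, replacing $s_1$ by its successor $s_1'$; since a colour differs from its successor (this is where $|C| \geq 2$ enters), immediately afterwards $s_1 \neq s_n$, so the privilege is genuinely destroyed. Applying the same two-consecutive-moves reasoning to player 1 shows that the privilege $s_1 = s_n$ can be restored only by a change of $s_n$, hence player $n$ moved in between, giving $n \in I$. To conclude, I would pick any $j \in I$, iterate the first observation downward to get $1 \in I$, apply the second to get $n \in I$, and then iterate the first observation downward from $n$ to force $n-1, \LL, 1 \in I$; thus $I = \{1, \LL, n\}$ and $F = \ES$, the desired contradiction.

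The main obstacle is establishing the two re-enabling observations cleanly, especially the anti-coordination case: both rely on the fact that a player's own colour is constant between two of its consecutive moves, so the only way its privilege can recur is through a move of the relevant neighbour, and for player 1 one must additionally verify that the successor colour is distinct so that the move really falsifies the privilege $s_1 = s_n$. Everything else is bookkeeping over the suffix $\xi$.
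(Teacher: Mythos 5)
Your proof is correct, and it rests on the same two pillars as the paper's: Note \ref{not:1} guarantees that every maximal improvement path is infinite, and your two ``re-enabling'' observations are precisely the contrapositive of the paper's remark that if a player is selected only finitely often then so is its successor on the ring. Where you genuinely diverge is in how the propagation gets started. The paper first proves that player $1$ is selected infinitely often by breaking the ring between players $n$ and $1$ and invoking Note \ref{not:3}: an infinite suffix in which player $1$ never moves would be an infinite improvement path in the coordination game on the chain $1 \to 2 \to \cdots \to n$, impossible since such paths have length at most $\frac{n(n-1)}{2}$. You instead observe by pigeonhole that $I \neq \emptyset$ because the path is infinite, propagate from an arbitrary member of $I$ down to player $1$, cross to player $n$ via the anti-coordination observation, and continue around the ring. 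Your route is more elementary, dispensing with Note \ref{not:3} and the chain decomposition entirely and using only the local structure of best responses; the paper's route reuses a lemma it needs anyway for the quantitative analysis in Theorem \ref{thm:2}. One small imprecision: in the anti-coordination case you justify $s_1 \neq s_n$ after player $1$'s move by appeal to the successor colour in the cyclic order, but that is the scheduler's choice, whereas this theorem quantifies over \emph{all} improvement paths, in which player $1$ may switch to any colour. The correct (and simpler) justification is that a better response must raise $p_1$ from $0$ to $1$, which by definition forces the new $s_1$ to differ from $s_n$.
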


\begin{proof}
  Consider an improvement path $\xi$.  We first prove that player $1$
  is infinitely often selected in $\xi$.  Suppose otherwise.  By Note
  \ref{not:1} $\xi$ is infinite, so from some moment on player $1$ is
  never selected in the infinite suffix $\phi$ of $\xi$. Break the
  ring by removing the link between players $n$ and $1$ and consider
  the resulting coordination game on the chain
  $1 \to 2 \to \LL \to n$.  Then $\phi$ is an infinite improvement
  path in this game, which contradicts Note \ref{not:3}.

  Note now that if some player $i$ is finitely often selected in
  $\xi$, then so is its successor.  Together with the above
  conclusion this implies successively that players $n, n-1, \LL, 2$
  are infinitely often selected in $\xi$.
\HB
\end{proof}

So to prove that  $\cal{G}$ admits self-stabilization we only need to check
that it admits closure. However, this holds only for games with two or three players.
In fact, we have the following result.

\begin{theorem} \label{thm:1}
Consider the game $\cal{G}$.
\begin{enumerate}[(i)]
\item If $n = 2$ then $\cal{G}$ admits self-stabilization in $0$ steps.

\item If $n = 3$ then $\cal{G}$ admits self-stabilization in $2$ steps.

\item If $n > 3$ then $\cal{G}$ does not admit self-stabilization.

\end{enumerate}
\end{theorem}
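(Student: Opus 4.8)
The plan is to reduce all three parts to a question about \emph{closure}. By Theorem~\ref{thm:fairness} every improvement path in $\cal{G}$ is fair, and by Corollary~\ref{cor:2} the game admits stability; hence an improvement path ensures self-stabilization (in $k$ steps) if and only if it reaches a legitimate joint strategy (within the first $k$ steps, after which stability keeps it legitimate). Recall that a player fails to play a best response exactly when player $1$ has $s_1 = s_n$, or a player $i \neq 1$ has $s_i \neq s_{i-1}$; call such a player \emph{active}, and write $U(s)$ for the number of active players. Since a state is legitimate iff exactly one player is active and, by Note~\ref{not:1}, $U(s) \geq 1$ always, we have: $s$ is legitimate iff $U(s) = 1$. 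Note for later that the only improvement step of an active player $i \neq 1$ is $s_i := s_{i-1}$, while an active player $1$ may switch to any colour different from $s_n$. For part~(i), when $n = 2$ every joint strategy has $U(s) = 1$ (if $s_1 = s_2$ only player $1$ is active, if $s_1 \neq s_2$ only player $2$ is active), so every state is legitimate and, with fairness, this gives self-stabilization in $0$ steps.

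For part~(ii) I would classify the non-legitimate states for $n = 3$. As $1 \leq U(s) \leq 3$, a non-legitimate state has $U(s) \in \{2,3\}$. A short check shows $U(s) = 3$ forces $s = (a,b,a)$ with $a \neq b$ (type~A), while $U(s) = 2$ is possible only as $s = (a,b,c)$ with $a,b,c$ pairwise distinct (type~B), the other two sign patterns being self-contradictory. I would then trace the improvement steps: from a type~B state every step reaches a legitimate state (player $1$ is not active there), while from a type~A state every step reaches a legitimate state except the move of player $1$ to a third colour, which reaches a type~B state. Hence no improvement path contains three consecutive non-legitimate states, so closure is reached within $2$ steps and, by stability, the path stays legitimate thereafter; this yields self-stabilization in $2$ steps. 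That one step need not suffice when $|C| \geq 3$ is witnessed by $(0,1,0) \to (2,1,0)$, whose successors are all legitimate.

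For part~(iii), where the real work lies, I would exhibit for every $n \geq 4$ a single infinite improvement path that never reaches a legitimate state; since fairness is automatic by Theorem~\ref{thm:fairness}, failing closure already shows that $\cal{G}$ does not admit self-stabilization. I would restrict to two colours $\{0,1\}$. A parity argument --- the number of colour changes around the ring is even --- shows that $U(s)$ is always odd, so a non-legitimate state is exactly one with $U(s) \geq 3$. Placing a token on the edge of each active player (on $(i-1,i)$ for $i \neq 1$ and on $(n,1)$ for player $1$), one checks that over two colours every improvement step moves the chosen token one edge forward along the ring, leaving $U$ unchanged if the next edge is empty and annihilating a pair (decreasing $U$ by $2$) if it is occupied. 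Starting from $(0,1,0,\ldots,0)$, which has exactly three tokens (on the edges $(n,1)$, $(1,2)$, $(2,3)$) and, since $n \geq 4$, at least one empty edge, I would repeatedly push forward a token whose next edge is empty. Such a token always exists, because three tokens on $n \geq 4$ edges leave an empty edge and hence a filled-to-empty boundary along the ring. Every state then has $U = 3$, so the path is infinite and never legitimate, which proves~(iii).

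The step I expect to be the main obstacle is the construction in part~(iii): one must verify that over two colours the improvement step of each kind of active player --- including player $1$ and player $n$ --- really does shift its token one edge forward rather than behaving irregularly, and that one can keep choosing a forward-pushable token forever without ever being forced into an annihilation. The parity observation, which makes $U$ odd so that $U = 3$ is the minimal non-legitimate value, together with the filled-to-empty-boundary argument, are exactly what make $U \equiv 3$ maintainable along an infinite improvement path.
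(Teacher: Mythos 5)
Your proof is correct, and parts (i) and (ii) follow essentially the same route as the paper: for $n=2$ every joint strategy is legitimate, and for $n=3$ one classifies the non-legitimate states (your type~A $(a,b,a)$ and type~B all-distinct states are exactly the paper's $0$-strategies and $1$-strategies) and checks that every improvement step strictly increases the number of best-responders until a legitimate state is reached, so two steps suffice. Where you genuinely diverge is part (iii). The paper simply exhibits, for each $n>3$, an explicit periodic two-colour improvement path (an eight-phase cycle through states of the form $bba^{n-4}ab$, etc.) and asserts that no state on it is legitimate. You instead prove a structural invariant: over two colours the active players correspond to tokens on edges (with the inverted convention at the edge $(n,1)$ to account for player $1$'s anti-coordination payoff), the parity of colour changes around the ring forces the number $U$ of active players to be odd, and each improvement step either advances a token into an empty edge (preserving $U$) or annihilates an adjacent pair (dropping $U$ by $2$); since three tokens on $n\geq 4$ edges always leave a filled-to-empty boundary, one can advance forever with $U\equiv 3$. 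I checked the delicate cases you flagged --- the moves of players $1$ and $n$, where the token convention flips --- and they do behave as claimed precisely because player $1$'s token sits on the non-colour-change configuration of edge $(n,1)$. Your argument costs more setup but buys a conceptual explanation of \emph{why} closure fails for $n>3$ (and why it cannot fail for $n=3$, where three tokens fill all three edges and force an annihilation), whereas the paper's explicit cycle is shorter to state but must be verified step by step and offers no such insight.
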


\begin{proof}
For simplicity we view each joint strategy as a string over the set of colours
that we denote by the initial letters of the alphabet. Different letters
stand for different colours.
  
\NI
$(i)$ In this case every joint strategy is legitimate.
\II

\NI
$(ii)$ For brevity we say that a joint strategy $s$ is an
\emph{$i$-strategy}, where $0 \leq i \leq 2$, if exactly $i$ players
hold in $s$ a best response.  The only $0$-strategy is of the form
$aba$. We reach from it in one step a $1$-strategy $cba$ (assuming $|C| > 2$)
or a $2$-strategy $bba$, $aaa$ or $abb$.

So consider now an arbitrary $1$-strategy.  If it is player 1 who
plays the best response, then $s$ is of the form $acb$ (so in this
case $|C| > 2$). Then the only possible improvement steps are
$acb \to aab$ or $acb \to acc$. In both cases we reach a $2$-strategy
in one step.

If it is player 2 who plays the best response, then $s$ is of the form
$aaa$ or $aab$, which contradicts the fact that $s$ is a $1$-strategy.
Finally, if it is player 3 who plays the best response, then $s$ is of
the form $baa$ or $aaa$, which also contradicts the fact that $s$ is a
$1$-strategy.

We conclude that a legitimate joint strategy is always reached in at most 2 steps.
\II

\NI
$(iii)$
Assume that $n > 3$.
Then the following infinite improvement path does not ensure closure:
\[
  \begin{array}{l}
(bba^{n-4}ab \to aba^{n-4}ab \to aba^{n-4}aa \to^{*} abb^{n-4}ba \to \\
\phantom{(}aab^{n-4}ba \to bab^{n-4}ba \to bab^{n-4}bb \to^{*} baa^{n-4}ab \to)^{*},
  \end{array}
\]
where each inner $^*$ stands for an appropriate sequence of $n-4$
improvement steps, while the outer $^*$ stands for the infinite
repetition of the exhibited prefix of an improvement path.  
\HB
\end{proof}

The above result explains the need for a scheduler.
As before we assume a cyclic order on the set of colours and denote
the successor of colour $c$ by $c'$.  The following result improves
upon Theorem \ref{thm:0}.  The differences are discussed after the
proof.

\begin{theorem} \label{thm:2} 
Consider the game $\cal{G}$. Suppose
  that $n \geq 3$ and $|C| \geq n-1$.  Let $f$ be a scheduler such
  that 
\[
  \begin{array}{l}
\mbox{$f(s,1) = s'_1$.}
  \end{array}
\]
Then $f$ ensures self-stabilization in $\cal{G}$ in 
$\frac12 (3 n + 1)(n - 2)$ steps.
\end{theorem}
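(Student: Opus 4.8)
The plan is to reduce the statement to a single quantitative claim about closure and then prove that claim with a potential-function argument. Since by Note \ref{not:1} the game has no Nash equilibria, every improvement path generated by $f$ is infinite, and by Theorem \ref{thm:fairness} it is fair; by Corollary \ref{cor:2}, once a legitimate joint strategy is reached all later ones are legitimate. Hence it suffices to show that every $f$-generated improvement path reaches a legitimate joint strategy within $\frac12(3n+1)(n-2)$ steps. First I would record a convenient description of the legitimate states: writing the joint strategy as the word $s_1\cdots s_n$ and calling a maximal block of equal consecutive colours a \emph{run}, a short computation with the payoff functions shows that exactly one player fails to play a best response iff $s_1\cdots s_n$ consists of at most two runs (either all colours equal, with player $1$ the unique non-best-responding player, or a single colour boundary, with the player just after the boundary the unique one). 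So closure means reaching a word with at most two runs.

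Next I would analyse the two kinds of moves. A coordination step at player $i\ge 2$ replaces $s_i$ by $s_{i-1}$ and can only destroy the boundary at $i$, possibly shifting it to $i+1$; in particular it never increases the number of runs. A move of player $1$ replaces $s_1$ by its cyclic successor and can increase the number of runs by at most one (only through the boundary between players $1$ and $2$). I would then introduce the potential $\Psi(s)=\sum_{i=2}^{n}[s_{i-1}\neq s_i]\,(n+1-i)$, weighting each coordination boundary by its distance to the right end. The case analysis gives that every coordination step decreases $\Psi$ by at least $1$, while every move of player $1$ increases $\Psi$ by at most $n-1$; moreover $\Psi\le\binom{n}{2}$ initially and, at the first legitimate word, $\Psi\ge 1$ unless that word has all colours equal. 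Consequently, if $P$ denotes the number of moves of player $1$ strictly before closure, the number of coordination steps before closure is at most $\binom{n}{2}+P(n-1)-1$, so the total number of steps is at most $\binom{n}{2}+Pn-1$, which equals $\frac12(3n+1)(n-2)$ once $P\le n-2$.

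It remains to bound $P$, and this is the crux, where the restriction $f(s,1)=s_1'$ and the hypothesis $|C|\ge n-1$ enter. The key observation is a \emph{freshness} principle: once player $1$ adopts a colour $v$ held by no other player, closure is reached before player $1$ can move again. Indeed, while player $1$ is frozen at $v$ the colour $v$ can only spread as a contiguous prefix growing rightward from player $1$ (new occurrences of $v$ arise solely by copying a neighbour already holding $v$), this prefix never shrinks, and player $1$ is re-enabled only when $s_n=v$, i.e. only when the prefix has engulfed the whole ring and the word is already the one-run legitimate state; and if instead the adversary freezes the prefix, the suffix is a coordination chain which by Note \ref{not:3} drains to a single colour, producing a two-run legitimate state. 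Either way a legitimate word is reached with no further move of player $1$. I would then argue that a fresh colour is adopted within the first $n-2$ moves of player $1$: for each earlier, non-closing move the newly adopted colour must already sit, disconnected from the prefix, somewhere among players $2,\ldots,n-1$, and tracing such an occurrence back through the purely-copying coordination dynamics shows it was present \emph{initially} at an interior position; since the colours produced by successive moves of player $1$ are the distinct consecutive colours $s_1',s_1'',\ldots$ (here $|C|\ge n-1$ guarantees they do not wrap around and collide), these occurrences inject into the $n-2$ interior positions, so $P\le n-2$.

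The hard part will be this last injection. Making it precise requires showing that the colour feeding each non-closing move of player $1$ is genuinely disconnected from the growing prefix and traces to a \emph{distinct} interior initial position; a naive placement of such a colour is simply absorbed by the prefix and never reaches the top, so the disconnection must be tracked carefully. In particular one must rule out the boundary scenario in which the top player's initial colour supplies an extra slot, which would only yield $P\le n-1$ and overshoot the target by $n+1$. A second, smaller delicacy is the degenerate closing in which a word of the form $a^{\,i-1}b\,a^{\,n-i}$ collapses directly to $a^{\,n}$: there two boundaries annihilate at once, $\Psi$ drops to $0$, and the $-1$ in the estimate is no longer automatic, so these closings need a separate check confirming that the path has already accrued enough slack for the bound still to hold. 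Once $P\le n-2$ is secured and these edge cases are dispatched, substituting into $\binom{n}{2}+Pn-1$ yields exactly $\tfrac12(3n+1)(n-2)$.
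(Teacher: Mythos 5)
Your skeleton --- the reduction to a closure bound via Note \ref{not:1}, Corollary \ref{cor:2} and Theorem \ref{thm:fairness}, the ``at most two runs'' characterization of legitimate joint strategies, the potential $\Psi$, and the freshness principle for player 1 --- is sound, and the freshness principle is exactly the key idea of the paper's proof. But the quantitative crux of your argument, the claim $P\le n-2$, is false. The paper's own lower-bound example (the one showing $\frac12(3n+1)(n-2)$ is attained) starts from $c_1c_{n-1}c_{n-2}\ldots c_2c_1$ with $|C|=n-1$: there players $2,\ldots,n$ initially hold \emph{all} $n-1$ colours, so no colour is fresh at the outset, and player 1 moves through $c_2,c_3,\ldots,c_{n-1},c_{n-1}'$, i.e.\ $n-1$ times, before the first legitimate joint strategy appears. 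So the ``boundary scenario'' you hoped to rule out is unavoidable, and your own injection heuristic ($P-1$ feeding occurrences into interior positions) can at best give $P\le n-1$. With $P=n-1$ your estimate $\binom{n}{2}+Pn-1$ overshoots the target by $n$, so the proof does not close. A repair inside your framework would have to recover those $n$ units of slack precisely in the runs where $P=n-1$, e.g.\ by showing that in that case player 1's first move cannot raise $\Psi$ by the full $n-1$ (because a boundary at position $2$ already exists) and/or that the terminal legitimate word then has $\Psi\ge 2$; you gesture at ``accrued slack'' but do not supply this, and it is exactly where the difficulty lies.

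For comparison, the paper sidesteps the issue by a different bookkeeping. It first shows that after at most one preliminary step the path reaches a ``lean'' joint strategy $s''$ in which players $2,\ldots,n$ hold at most $n-2$ colours, and only \emph{from $s''$ on} does it bound player 1's colour changes by $n-2$ (the one earlier move is accounted for by the isolated $+1$). It then charges every step of every player $i$ either to one of the $i-1$ colours held by its predecessors at the start of the suffix or to one of the $\le n-2$ colours subsequently introduced by player 1, giving $\sum_{i=1}^{n}(i-1+n-2)-2+1$ directly, with a small case analysis replacing your $\Psi\ge 1$ endpoint argument. Your potential function is an appealing alternative to that per-player charging, but as written the central bound on the number of player-1 moves is wrong and the arithmetic does not reach $\frac12(3n+1)(n-2)$.
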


\begin{proof}
We split the proof in two parts. The slightly unusual naming
of joint strategies in Part 1 will become clear in Part 2.
\II

\NI
\emph{Part 1: self-stabilization.}

Consider an improvement path $\xi$ generated by the
scheduler $f$ that starts in a joint strategy $s$.  Call a joint
strategy \emph{lean} if the players $2, \LL, n$ hold in it at most
$n-2$ different colours.  
We now establish a number of claims about $\xi$.  
\III

\NI
\emph{Claim 1.} A lean joint strategy appears in $\xi$.
\II

\NI
\emph{Proof}. 
By Theorem \ref{thm:fairness}
eventually some player $i \in \{3, \LL, n\}$ is selected in $\xi$. 
The resulting joint strategy becomes then lean.  
\HB
\II

Let $s''$ be the first lean joint strategy in $\xi$.  Call a colour
\emph{fresh} in $\xi$ if it is not held in $s''$ by any player
$i \neq 1$.  Fresh colours exist since $|C| \geq n-1$. Let $c$ be the
first fresh colour that follows, in the cyclic order on $C$, the
colours that are held in $s''$ by players $i \neq 1$.
\III

\NI
\emph{Claim 2.} Player $1$ eventually introduces in $\xi$ the colour $c$.
\II

\NI
\emph{Proof}. 
By the definition of the scheduler and Theorem \ref{thm:fairness}.
\HB
\III

\NI
\emph{Claim 3.} Player $1$ eventually introduces in $\xi$ 
the successor $c'$ of the colour $c$.
\II

\NI
\emph{Proof}. 
By the definition of the scheduler and Theorem \ref{thm:fairness}.
\HB
\II

Consider now the joint strategies $s^1$ and $s^5$ resulting from the
steps described in Claims 2 and 3.  Let
\[
s^4 \to s^5
\]
be the last step of the segment $s^1 \to^* s^5$ of $\xi$.
So $s^1_1 = s^4_1 = s^4_n = c$ and $s^5_1 = c'$. 

Take now a joint strategy $s^6$ from the segment $s^1 \to^* s^5$,
different from $s^1$ and $s^5$. In $s^6$ player $1$ is not
selected. Moreover, by the definition of the game, each better
response of a player different than $1$ is the colour of his
predecessor. So only player $1$ can introduce in $\xi$ colour $c$.

This implies by induction that each time some player $i$ switches in
$s^6$ to the colour $c$, all players $1,\ldots, i-1$ hold in $s^6$ the colour
$c$. So the only possibility that player $n$ holds the colour $c$ in $s^4$
is that all players hold in $s^4$ the colour $c$. Informally, the colour $c$
`travelled the whole ring’. So $s^4$ is a legitimate joint
strategy. Hence by Corollary \ref{cor:2} and Theorem \ref{thm:fairness}
the scheduler $f$ ensures self-stabilization.  
\II

\NI
\emph{Part 2: computing the bound.}

Recall that $s''$ is the first lean joint strategy in $\xi$.  Let $s'$
be the first joint strategy in the segment $s \to^{*} s''$ of $\xi$
such that in the segment $s' \to^{*} s''$ player 1 is not selected.
We first determine the maximum number of steps in the prefix
$s \to^* s'$ of $\xi$.  Since $s''$ is the first lean joint strategy
in $\xi$, in the prefix $s \to^* s'$ only players 1 and 2 are
selected. Moreover, by the choice of $s'$ the last step in this prefix
involves player 1.  Further, player 1 can be selected the second time
only after player $n$ has been selected and no player can be selected
twice in succession.  These constraints leave only two possible
schedulings that yield $s \to^* s'$, namely 1 and 2, 1.

However, the prefix $s \to^* s'$ cannot have 2 steps. Indeed, otherwise it would have
the form
\[
(c_1, c_2, \LL, c_n) \to (c_1, c_1, c_3, \LL, c_n) \to (c'_n, c_1, c_3, \LL, c_n),
\]
where $c_1= c_n$. So $(c_1, c_1, c_3, \LL, c_n)$ is lean, which contradicts
the choice of $s''$ as the first lean joint strategy in $\xi$.
Consequently the prefix $s \to^* s'$ can have at most 1 step.

Let now $\xi'$ be the suffix of $\xi$ that starts in $s'$.  We now
determine the number of steps in $\xi'$ that yield self-stabilization.
We can assume that it takes in $\xi$ at least three steps to reach
$s^5$, as otherwise the bound holds.  Consider the last three steps in
$\xi$ that lead to $s^5$:
\[
s^2 \to s^3 \to s^4 \to s^5.
\]
We noticed already that in $s^4$ all players hold the colour
$c$. Also, the last $n$ steps in $\xi$ that lead to $s^4$ consist of
switching to the colour $c$.  Hence $s^2$ is of the form
$(c, \LL, c, a, b)$, where $a \neq c$ and $b \neq c$.  
\II

\NI
\emph{Case 1} $a = b$.

Then $s^2$ is legitimate.  We first compute the number of steps in the
prefix $\chi$ of $\xi'$ leading from $s'$ to $s^4$.  Consider some
player $i$.  In $\chi$ he can be involved in two types of steps:

\begin{itemize}

\item in which he switches to a colour held in $s'$ by one his
  predecessors $1, \LL, i-1$,

\item in which he switches to a colour introduced in $\chi$ by player
  1 (to identify such steps in $\chi$ we can `mark' such colours in some way).

\end{itemize}

The first possibility leads to at most $i-1$ steps, while the second
one to at most $n-2$ steps since starting from the lean joint strategy $s''$
(and hence from $s'$) player 1 can change his colour in $\chi$ at most
$n-2$ times.  This means that the total number of steps in $\chi$ is
at most
\[
\sum_{i = 1}^n (i - 1 + n-2) = \frac{n (n-1)}{2} + n(n-2).
\]
Deducting 2 for the steps $s^2 \to s^3 \to s^4$
we get the bound $\frac{n (n-1)}{2} + n(n-2) - 2$
on the number of steps in $\xi'$ that yield self-stabilization.
\II

\NI
\emph{Case 2} $a \neq b$.

Then $s^2$ is not legitimate but $s^3$ is, so we need to compute the
number of steps in $\xi'$ leading from $s'$ to $s^3$. To this end we
modify $\xi'$ to another improvement path $\psi$ by replacing the step
$s^2 \to s^3$ by
\[
s^2 \to (c, \LL, c, a, a) \to s^3
\]
and apply the reasoning from Case 1 to $\psi$. This yields the above bound
on the number of steps in $\psi$ needed to reach $(c, \LL, c, a, a)$
and hence the same bound on the number of steps in $\xi'$ leading from
$s'$ to $s^3$.
\III
\II

We noticed already that the prefix $s \to^* s'$ can have at most 1 step, so
we conclude that $\xi$ ensures self-stabilization in
$\frac{n (n-1)}{2} + n(n-2) - 2 + 1 = \frac12 (3 n + 1)(n - 2)$ steps.
\HB
\end{proof}

The original bound of \cite{Dij74} on the number of colours was
$|C| \geq n$.  The authors of \cite{FHP05} noticed that it can be
lowered to $|C| \geq n-1$ and that it is optimal in the sense that for
$|C| = n-2$ the claim of the theorem does not hold.  The latter
observation was established by noting that starting from the joint
strategy
\[
c_2 c_1 c_{n-2} \LL c_2 c_1
\]
the counterclockwise scheduling of the players combined with the
selecting of the colours in the assumed cyclic order by player 1
generates an infinite improvement path which does not yield
self-stabilization.  The fact that self-stabilization can be reached
in $\mathcal{O}(n^2)$ steps when $|C| \geq n$ was established in
\cite{M95}.  Finally, Theorem \ref{thm:1} shows that the use of a
scheduler in Theorem \ref{thm:2} is necessary.

%
%
%
%
Next, we show that $\frac12 (3 n + 1)(n - 2)$
is also a lower bound. 

\begin{example}
  Consider the game $\mathcal{G}$ for $n$ players with $|C| \geq
  n-1$. Assume the cyclic order
  $c_1 \to c_2 \to \dots \to c_{n-1} \to \LL $ on $C$.  So if
  $|C| = n-1$, then $c'_{n-1} = c_{1}$ and otherwise
  $c'_{n-1} = c_{n}$.

  Then the following prefix of an improvement path is generated by every
  scheduler mentioned in Theorem \ref{thm:2} and ends in a legitimate joint
  strategy:
\[
\begin{array}{rcll}
c_{1}c_{n-1}c_{n-2}\ldots c_{1} &\to\\
c_{2}c_{n-1}c_{n-2}\ldots c_{1} & \xrightarrow{n-1 \text{ steps}} & c_{2}c_{2}c_{n-1}c_{n-2}\ldots c_{2} &\to\\
c_{3}c_{2}c_{n-1}c_{n-2}\ldots c_{2} & \xrightarrow{n-1 \text{ steps}} & c_{3}c_{3}c_{2}c_{n-1}\ldots c_{3} &\to\\
c_{4}c_{3}c_{2}c_{n-1}\ldots c_{3} & \xrightarrow{n-1 \text{ steps}} & c_{4}c_{4}c_{3}c_{2}c_{n-1}\ldots c_{4} &\to\\
c_{5}\ldots c_{2}c_{n-1}\ldots c_{4} & \xrightarrow{n-1 \text{ steps}} & c_{5}c_{5}\ldots c_{2}c_{n-1}\ldots c_{5} &\to\\
& \vdots \\
c_{n-1}\ldots c_{2}c_{n-1}c_{n-2} &\xrightarrow{n-1 \text{ steps}} & c_{n-1}c_{n-1}\ldots c_{2}c_{n-1} &\to\\
c'_{n-1}c_{n-1}\ldots c_{2}c_{n-1} &\xrightarrow{\frac{n(n-1)}{2} -2 \text{ steps}} & 
c'_{n-1}c'_{n-1}\ldots c'_{n-1}c_{n-1}c_{n-1}.
\end{array}
\]

The number of steps in the last line needs to be clarified since the
scheduling used in the proof of Note \ref{not:3} yields already after
$\frac{n(n-1)}{2} - (n-1)$ steps the legitimate joint strategy
$c'_{n-1}c_{n-1}\ldots c_{n-1}c_{n-1}c_{n-1}$, so `too
early'. Therefore we modify this scheduling to
\[
(n), (n-1, n), (n-2, n-1, n), \LL, 
(3, 4, \LL, n-1),
(2, 3, \LL n-2, n, n-1, n).
\]
This way we ensure that the legitimate joint
strategy is reached only after $\frac{n(n-1)}{2} - 2$ steps.
Alternatively, we could use the scheduling
\[\
  \begin{array}{l}
(n, n-1, \LL, 2), (n, n-1,\LL,3), \LL, (n, n-1, n-2), (n,n-1), (n), \\
(n, n-1, \LL, 2), (n, n-1,\LL,3), \LL, (n, n-1, n-2), (n).
  \end{array}
\]

The first and the last two lines consist in total of
$1 + \frac{n(n-1)}{2}-2$, so $\frac{(n+1)(n-2)}{2}$ steps, while each
of the remaining $n-2$ lines consists of $n$ steps. Therefore the
total number of steps to reach 
$c'_{n-1}c'_{n-1}\ldots c'_{n-1}c_{n-1}c_{n-1}$ equals
$\frac{(n+1)(n-2)}{2} + n(n-2) = \frac12 (3 n + 1)(n - 2)$.
Note that no other listed joint strategy is legitimate.

\HB
\end{example}

\section{Related work and discussion}

Starting from \cite{HT04}, a paper that relates secret sharing and
multiparty communication protocols to game theory, a growing
literature keeps revealing rich connections between game theory and
distributed computing. For a short overview of the early connections
see Section 4 of \cite{Hal07}.

Let us mention a couple of more recent examples.  The authors of
\cite{ADH13} provide a game-theoretic analysis of the leader election
algorithms on a number of networks for both the synchronous case and
the asynchronous case. In turn, \cite{FO17} provides a framework in
which the processes and the environment of a distributed system are
viewed as players in an extensive game, in which implementations are
interpreted as strategies with an implementation being correct if the
corresponding strategy is winning.

To discuss the papers about connections between game theory and
self-stabiliza\-tion note first that we followed here the original
Dijkstra's definition of a legitimate global state as the one in which
exactly one machine can change its state.  If we view a legitimate
global state as the one in which no machine can change its state and
drop the fairness assumption then we enter the area of
\emph{self-stabilizing algorithms}. An early example of such an
algorithm is the one introduced in \cite{SRR95} that computes a
maximal independent set (MIS).

Probably the first paper that noted the connection between the
self-stabilizing algorithms and game theory is \cite{Dasgupta:2006},
where the notion of a \emph{selfish stabilization} is introduced.  The
authors attached to each node of a graph a cost function (a customary
alternative to the payoff functions in the definition of strategic
games) to derive a simple self-stabilizing algorithm that constructs a
spanning tree in a final state corresponding to a Nash equilibrium
of the underlying strategic game.
In turn, the authors of \cite{Jaggard:2014} related self-stabilization to
\emph{uncoupled dynamics}, a procedure used in game theory to reach a
Nash equilibrium in situations when players do not know each others'
payoff functions.

Recently, the authors of \cite{Yen:2016} observed that
self-stabilizing algorithms that compute a maximal weighted
independent set (MWIS) and MIS can be analyzed using game-theoretic
tools.  To relate this work to ours recall that in our setup we
defined a legitimate joint strategy as the one in which exactly one
player does not play a best response.  Consider now an alternative
definition that equates the legitimate joint strategy with a Nash
equilibrium.  We need now to recall the following definition due to
\cite{MS96}.  We say that a strategic game has the \bfe{finite
  improvement property} (\bfe{FIP}) if every improvement path is
finite.

The authors of \cite{Yen:2016} found that the self-stabilizing
algorithms that compute a MWIS and a MIS correspond to natural
strategic games on graphs that have the FIP. The computations of such
an algorithm then correspond to the (necessarily finite) improvement
paths in the corresponding game.  They also noticed that if a game on
a graph has the FIP then after an appropriate translation to a
distributed system a self-stabilizing algorithm is obtained. Indeed,
the FIP ensures the closure property, while the stability is
immediate. These observations also clarify the set up of the just
discussed papers \cite{Dasgupta:2006} and \cite{Jaggard:2014}.

We conclude this discussion of relations between self-stabilization
and game theory by the following remark.  The author of \cite{Gou01}
introduced the concept of a \bfe{weak self-stabilization} which
guarantees that a distributed system reaches a legitimate state only
by \emph{some} (and thus not necessarily all) sequence of moves. This
concept can be easily incorporated into our framework by stipulating
that a game \bfe{admits weak self-stabilization} if from every initial
joint strategy some improvement path ensures
self-stabilization. Schedulers that ensure self-stabilization
obviously establish weak self-stabilization.  This property naturally
corresponds to the class of weakly acyclic games introduced in
\cite{You93} and \cite{Mil96}. They are defined by the following
weakening of the FIP: a game is \bfe{weakly acyclic} if for every
initial joint strategy there exists a finite improvement path that
starts in it.  For a thorough analysis of weakly acyclic games see
\cite{AS15} from which we adopted the concept of a scheduler.

\section*{Acknowledgment}
We acknowledge useful comments of Mohammad Izadi, Zoi Terzopoulou and
Peter van Emde Boas.  First author was partially supported by the NCN
grant nr 2014/13/B/ST6/01807.

\bibliographystyle{abbrv}
\bibliography{/ufs/apt/bib/s,/ufs/apt/bib/refs}

\end{document}